\documentclass[11pt,a4paper]{article}
\usepackage[utf8]{inputenc}
\usepackage[T1]{fontenc}
\usepackage[english]{babel}
\usepackage{lmodern}
\usepackage{microtype}
\usepackage[a4paper,top=2.5cm,bottom=2.5cm,left=2.7cm,right=2.7cm,footskip=1.2cm]{geometry}
\usepackage{amsmath,amssymb,amsthm,mathtools,bm}
\usepackage{graphicx,booktabs,array,tabularx,multirow,makecell,longtable}
\usepackage{caption,subcaption}
\usepackage{xcolor,enumitem,csquotes}
\usepackage{tikz}
\usetikzlibrary{arrows.meta,positioning,fit,calc,shapes.geometric,decorations.pathreplacing}
\usepackage{siunitx}
\newcolumntype{L}[1]{>{\raggedright\arraybackslash}p{#1}}
\newcolumntype{Y}{>{\raggedright\arraybackslash}X}
\usepackage[backend=biber,style=authoryear-comp,natbib=true,url=true,doi=true,isbn=false,maxbibnames=99,maxcitenames=2,uniquelist=false,sorting=nyt]{biblatex}
\usepackage[colorlinks=true,linkcolor=black,citecolor=black,urlcolor=black]{hyperref}
\usepackage[noabbrev,capitalize]{cleveref}
\theoremstyle{plain}
\newtheorem{proposition}{Proposition}

\theoremstyle{definition}
\newtheorem{definition}{Definition}

\newtheorem{criterion}{Criterion}
\theoremstyle{remark}

\crefname{assumption}{Assumption}{Assumptions}
\Crefname{assumption}{Assumption}{Assumptions}
\crefname{definition}{Definition}{Definitions}
\Crefname{definition}{Definition}{Definitions}
\crefname{criterion}{Criterion}{Criteria}
\Crefname{criterion}{Criterion}{Criteria}
\setlist{nosep,leftmargin=*}
\definecolor{seriesblue}{RGB}{31,78,121}
\definecolor{seriesgray}{RGB}{242,244,247}
\tikzset{seriesbox/.style={draw=black!70,rounded corners=2pt,align=center,inner sep=5pt,minimum height=9mm,fill=seriesgray},seriesarrow/.style={-{Latex[length=2.2mm]},thick,draw=black!75}}

\hypersetup{
  pdftitle={A Taxonomy of Event-Linked Perpetual Futures: Design Axes, Failure Modes, and Empirical Evaluability},
  pdfauthor={Maksym Nechepurenko},
  pdfsubject={Event-Linked Perpetual Futures; Taxonomy; Multi-Leg Derivatives; Finality; Empirical Evaluability},
  pdfkeywords={prediction markets, perpetual futures, event contracts, conditional probability, event spreads, baskets, entropy, liquidity, rolling contracts}
}

\title{A Taxonomy of Event-Linked Perpetual Futures:\\
Design Axes, Failure Modes, and Empirical Evaluability}
\author{Maksym Nechepurenko\thanks{Founder and Director of Research, ForesightFlow, the Research Department of Devnull FZCO, Dubai, United Arab Emirates. Email: \texttt{maksym@devnull.ae}.}}
\date{July 19, 2026}

\begin{document}
\maketitle
\begin{abstract}
The label \emph{event-linked perpetual} often conflates mathematically different contracts. A derivative on one binary probability is not the same object as a conditional ratio, spread, basket, path functional, liquidity index, rolling sequence, or flow-only swap. Treating them as variants of one risk engine obscures their failure modes.

This paper replaces a flat product list with a four-axis taxonomy: underlying geometry, temporal structure, settlement structure, and venue-oracle composition. It derives a corrected inheritance map from the single-event case and proves six narrow results. Terminal shortfall depends on exact settlement support and collateral, not on a generic ``bounded-event'' label. Conditional ratios become locally ill-conditioned as the conditioning value approaches zero. After one leg of a spread becomes final, the remaining exposure is an affine single-leg position. Basket risk depends on feasible joint outcomes and weights, not leg count. Entropy settles deterministically to zero at binary finality, whereas realized variation requires an explicit sampling and terminal-jump convention. Fixed-path replay does not identify deployment behavior when a liquidity contract changes its own reference process.

No new empirical estimates are reported. The paper assigns evidence grades and minimum data requirements to each design. The central conclusion is that no universal event-perpetual engine exists: every contract must separately define support, clocks, settlement, and source hierarchy, and its controls must satisfy all necessary conditions generated by those choices.
\end{abstract}

\noindent\textbf{Keywords:} prediction markets; perpetual futures; event contracts; conditional probability; event spreads; event baskets; finality; market microstructure.\\
\textbf{JEL Classification:} G13, G14, G18.

\clearpage
\section{Introduction}
\label{sec:introduction}

A ``perpetual on an election'' can describe at least four different balance sheets. It may track one candidate's binary claim, the spread between two candidates, a weighted basket of state-level markets, or a rolling sequence of election-related contracts. A fifth proposal may pay only a periodic flow derived from disagreement and have no terminal principal settlement. These contracts can share an interface while having different mathematical objects underneath it.

The distinction matters because a risk control is inherited only when the property that justified it is inherited. A terminal-support test is relevant to a contract with a terminal cash value. It is not automatically relevant to a flow-only swap. A boundary-aware funding rule for an index in $[0,1]$ does not transfer unchanged to a spread in $[-1,1]$. A halt tied to one event time does not specify what happens when two legs finalize asynchronously. A fixed-path replay is informative when the underlying path can plausibly be treated as invariant to the overlay; it is not informative for a derivative whose own trading changes the liquidity index being measured.

\subsection{Research question}

The paper asks:

\begin{quote}
Which structural properties define an event-linked derivative, which failure modes follow from those properties, and what evidence is required before a proposed design can be evaluated empirically?
\end{quote}

The answer is organized around axes rather than product names. A contract is represented by its underlying geometry, temporal rule, settlement rule, and venue-oracle composition. Canonical product examples are then points in this design space, not mutually exclusive categories.

\subsection{Relationship to the single-event case}

The revised first paper in this series treats the single binary probability-index overlay as a falsification problem rather than a validated product design \citep{paper1_revision_2026}. Two corrections are load-bearing here. First, terminal solvency is determined by the full entry-to-terminal account identity and the attainable settlement support. Second, uniform relative-basis pressure near zero or one is incompatible with bounded, economically payable funding transfers. This paper inherits those results only where their assumptions survive.

The original taxonomy overstated inheritance in several places. It treated all multi-leg bounded contracts as direct instances of the single-event collateral proposition, grouped entropy and variance despite different terminal behavior, and placed rolling and funding-only structures on the same axis as underlying geometry. The present revision removes those claims and proves narrower results.

\subsection{Contributions}

The revision makes six contributions.

\begin{enumerate}[label=(\roman*)]
  \item It defines a four-axis specification tuple that separates geometry, time, settlement, and source composition.
  \item It gives a support-dependent terminal-shortfall test that applies across variant designs without assuming one leverage convention.
  \item It proves variant-specific results for conditional sensitivity, asynchronous spreads, basket support, entropy collapse, and reflexive replay.
  \item It distinguishes true underlying families from modifiers: rolling is temporal; funding-only is a flow settlement convention; cross-venue construction is a source-composition choice.
  \item It supplies a corrected control-inheritance matrix showing where Paper~1 controls apply, require redesign, or are inapplicable.
  \item It specifies an empirical-evaluability ladder and minimum data package for each canonical design without presenting new empirical estimates.
\end{enumerate}

\subsection{Evidence and scope}

This is a formal taxonomy and research-design paper. It uses previously documented properties of Polymarket-class central limit order book (CLOB) venues only to motivate observability constraints. It does not rerun the Polymarket archive, estimate demand for any variant, or claim that a listed design would sustain an equilibrium. Paper~3 studies manipulation and regulation; Paper~4 studies fill-side participant behavior and quote-attribution limits \citep{paper3_manipulation_2026,paper4_microstructure_2026}.

\subsection{Organization}

\Cref{sec:base-contract} states the base account identity. \Cref{sec:design-axes} defines the taxonomy. \Cref{sec:conditional,sec:spread,sec:basket,sec:path-functionals,sec:liquidity,sec:rolling-flow} analyze canonical designs. \Cref{sec:crosscutting} gives the corrected inheritance map, and \cref{sec:evaluability} specifies empirical requirements.

\section{Related Design Literature}
\label{sec:related-work}

The taxonomy intersects several established literatures, but no one literature supplies the complete contract specification.

Combinatorial information markets provide the natural foundation for conditional and logically linked claims \citep{hanson_2003}. They solve a spot-market expressiveness problem; a leveraged overlay adds collateral, basis, and finality questions. Crypto perpetual futures provide the funding and non-expiring trading convention \citep{he_manela_schwert_2024}; their usual continuously tradable reference asset is precisely the assumption under examination in event-linked designs.

Basket derivatives provide tools for weighted aggregation and dependence-sensitive risk \citep{krekel_etal_2004}. Variance-swap and volatility-trading research clarifies that a path functional requires an explicit sampling and replication convention \citep{carr_madan_1998,demeterfi_etal_1999}. Information theory supplies the entropy transform and its endpoint properties \citep{cover_thomas_2006}. Market-liquidity and empirical-microstructure research explains why a liquidity statistic is endogenous to funding conditions, order placement, and execution \citep{brunnermeier_pedersen_2009,hasbrouck_2007}.

The present contribution is not a new pricing formula for each family. It is a contract-specification and evaluability framework that identifies where these adjacent literatures can be inherited and where event finality creates an additional state variable.

\section{Base Contract and Corrected Risk Inheritance}
\label{sec:base-contract}

Prediction-market prices need not equal objective probabilities because wealth, risk preferences, fees, and market structure enter the traded price \citep{wolfers_zitzewitz_2004,manski_2006}. Throughout, an event-market price is an observable reference value. The taxonomy does not require a probabilistic interpretation unless a contract definition explicitly uses one.

\subsection{Terminal settlement support}

Let $Z$ be the terminal settlement value of a derivative and let its attainable support be
\[
  \mathcal Z=\{z:\text{$z$ is permitted by the contract's outcome and settlement rules}\}.
\]
Write $z_{\min}=\inf\mathcal Z$ and $z_{\max}=\sup\mathcal Z$, and assume the entry mark $q_0\in[z_{\min},z_{\max}]$. Let $x>0$ be position size. Let $C_\tau^{\mathrm{net}}$ be initial collateral plus external top-ups, less withdrawals, funding, fees, and other cash transfers, but excluding the position's own price profit and loss. Intermediate variation-margin transfers may change the timing of cash, but their sum with the terminal mark change equals the entry-to-terminal price change.

\begin{proposition}[Support-dependent terminal shortfall]
\label{prop:support-shortfall}
For a long position, terminal equity is
\[
  E_\tau^{\mathrm{long}}=C_\tau^{\mathrm{net}}+x(Z-q_0).
\]
For a short position,
\[
  E_\tau^{\mathrm{short}}=C_\tau^{\mathrm{net}}+x(q_0-Z).
\]
If $z_{\min}$ and $z_{\max}$ are attainable, the worst-case long and short price losses from entry are $x(q_0-z_{\min})$ and $x(z_{\max}-q_0)$, respectively. A terminal account shortfall occurs exactly when net collateral is below the applicable adverse price loss.
\end{proposition}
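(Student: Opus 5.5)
The argument is pure accounting, so I would organise it as three short steps: derive the equity identity, optimise it over the settlement support, and read off the shortfall condition.

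\emph{Step 1: equity identity.} Define terminal equity as everything credited to the account up to $\tau$. By construction, $C_\tau^{\mathrm{net}}$ already contains every cash flow except the position's own price profit and loss. Let $t_0<t_1<\dots<t_n=\tau$ be the variation-margin dates, with marks $q_{t_0}=q_0$ and $q_\tau=Z$. A long position then receives
\[
x\sum_{k=1}^{n}\bigl(q_{t_k}-q_{t_{k-1}}\bigr),
\]
and this telescopes to $x(Z-q_0)$. Hence $E_\tau^{\mathrm{long}}=C_\tau^{\mathrm{net}}+x(Z-q_0)$. The short position is the sign-reversed claim, so the same sum with opposite sign gives $E_\tau^{\mathrm{short}}=C_\tau^{\mathrm{net}}+x(q_0-Z)$. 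This is where the stated assumption is used: intermediate transfers change only the timing of cash, never its total.

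\emph{Step 2: worst-case losses.} For fixed $C_\tau^{\mathrm{net}}$, the long price loss $x(q_0-Z)$ is increasing in $-Z$, because $x>0$. Over $Z\in\mathcal Z$ its supremum is therefore $x(q_0-z_{\min})$. Since $z_{\min}$ is assumed attainable, this supremum is a maximum, realised at $Z=z_{\min}$. The same argument applied to the short loss $x(Z-q_0)$ gives the maximum $x(z_{\max}-q_0)$ at $Z=z_{\max}$. The assumption $q_0\in[z_{\min},z_{\max}]$ makes both quantities nonnegative, so they are genuine losses.

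\emph{Step 3: shortfall condition.} Define a terminal shortfall as $E_\tau<0$. For the long position, $E_\tau^{\mathrm{long}}<0$ is equivalent to $C_\tau^{\mathrm{net}}<x(q_0-Z)$. This gives two readings of ``exactly when''.
\begin{enumerate}
  \item Pathwise: a shortfall occurs exactly when net collateral is below the realised adverse price loss.
  \item Worst case: a shortfall is possible for some attainable outcome if and only if $C_\tau^{\mathrm{net}}<x(q_0-z_{\min})$. Sufficiency follows by taking $Z=z_{\min}$. Necessity follows because every attainable outcome has $x(q_0-Z)\le x(q_0-z_{\min})$.
\end{enumerate}
The short case is symmetric, with $z_{\max}$ in place of $z_{\min}$.

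\emph{Main obstacle.} The delicate step is interpretive rather than computational. The biconditional is clean only when $C_\tau^{\mathrm{net}}$ does not depend on the outcome $Z$. Funding and fees could depend on the realised path, so I would state the worst-case claim conditional on the realised non-price cash flows, and otherwise fall back on the pathwise reading. I would also note that without attainability the supremum need not be reached. In that case the worst case becomes a boundary case: if $C_\tau^{\mathrm{net}}$ exactly equals $x(q_0-z_{\min})$, no shortfall ever occurs, even though the condition sits on the bound. This is exactly why the proposition assumes attainability.
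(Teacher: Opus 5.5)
Your proof is correct and is essentially the paper's argument: the paper splits the price profit and loss at a single pre-terminal mark $q_{\tau^-}$ where you telescope over all variation-margin dates, and both then optimize over $\mathcal Z$ and read off the equivalence. One small correction to your closing aside: with shortfall defined as $E_\tau<0$, the equivalence ``shortfall is possible iff $C_\tau^{\mathrm{net}}<x(q_0-z_{\min})$'' holds even when $z_{\min}$ is not attained (pick an attainable $Z$ close enough to the infimum), and your boundary case $C_\tau^{\mathrm{net}}=x(q_0-z_{\min})$ is consistent with that equivalence, so attainability is needed only to make the worst-case loss a realized maximum rather than a supremum.
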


\begin{proof}
For any pre-terminal mark $q_{\tau^-}$, long price profit and loss decomposes as
\[
  x(q_{\tau^-}-q_0)+x(Z-q_{\tau^-})=x(Z-q_0),
\]
and the short identity has the opposite sign. Minimizing the long payoff over $\mathcal Z$ selects $z_{\min}$; minimizing the short payoff selects $z_{\max}$. Negative equity is therefore equivalent to net collateral being smaller than the applicable adverse entry-to-terminal price loss.
\end{proof}

The proposition is deliberately more general than a leverage slogan. A probability-index long with $z_{\min}=0$ and collateral $xq_0/L$ has adverse-outcome shortfall for every $L>1$ if it survives without top-up, close, conversion, or guarantee. A spread or basket requires its own support and collateral convention. The number of legs does not determine solvency.

\subsection{Boundary funding}

For a normalized underlying $I_t\in[0,1]$, Paper~1 shows that a transfer bounded by payer equity cannot enforce a penalty uniformly proportional to relative basis $|q_t-I_t|/\min\{I_t,1-I_t\}$ arbitrarily close to zero or one \citep{paper1_revision_2026}. The result is inherited only when:

\begin{enumerate}[label=(\alph*)]
  \item the contract has an economically meaningful normalized boundary;
  \item the design objective is relative-basis control near that boundary; and
  \item transfers are required to remain finite and payable.
\end{enumerate}

A spread crossing zero does not inherit this result merely because a denominator could be chosen to vanish at zero. That would be a design-created singularity, not a natural boundary of the support.

\subsection{Five distinct risk channels}

A variant can fail through several channels that should not be collapsed into ``resolution risk.''

\begin{table}[ht]
\centering
\caption{Risk channels used throughout the taxonomy. The channels can coexist and require different controls.}
\label{tab:risk-channels}
\small
\begin{tabularx}{\textwidth}{L{2.8cm}YY}
\toprule
Channel & Question & Typical control object \\
\midrule
Support and collateral & Can the adverse terminal value exceed account collateral? & Exact support, margin, conversion, guarantee \\
Execution & Can exposure be reduced in available depth before a state transition? & Depth-aware limits, auctions, reduce-only \\
Definition & Is the payoff defined in every admissible state? & Void, fallback, roll, or conversion rule \\
Finality & Which leg, oracle, or protocol state is binding and when? & State machine and per-leg clocks \\
Reflexivity & Does trading in the derivative alter its own reference process? & Endogeneity model, position limits, alternative evidence \\
\bottomrule
\end{tabularx}
\end{table}

A complete variant specification must identify every active channel before selecting a margin or funding rule.

\section{A Four-Axis Taxonomy}
\label{sec:design-axes}

\begin{definition}[Event-linked derivative specification]
\label{def:specification}
An event-linked derivative specification is a tuple
\[
  \mathfrak C=(g,\mathcal T,\mathcal S,\mathcal C),
\]
where:
\begin{enumerate}[label=(\roman*)]
  \item $g$ is the underlying geometry mapping observable leg states and, where relevant, their history into a contract index;
  \item $\mathcal T$ is the temporal rule defining active legs, event times, rolls, and observation windows;
  \item $\mathcal S$ is the settlement rule defining cash flows, terminal values, conversions, voids, and fallback states; and
  \item $\mathcal C$ is the composition rule identifying venues, adapters, oracles, source hierarchy, and finality dependencies.
\end{enumerate}
\end{definition}

A user-interface label is not a specification. Two contracts with the same $g$ but different $\mathcal S$ can have different terminal support. Two contracts with the same $g$ and $\mathcal S$ but different $\mathcal C$ can have different latency and manipulation surfaces.

\subsection{Axis 1: underlying geometry}

Canonical geometries include:

\begin{itemize}
  \item single probability $p_t^{(A)}$;
  \item conditional ratio $u_t/v_t$;
  \item spread $p_t^{(A)}-p_t^{(B)}$;
  \item weighted basket $\sum_i w_i p_t^{(i)}$;
  \item path functional, such as realized variation or entropy;
  \item liquidity or another microstructure index.
\end{itemize}

\subsection{Axis 2: temporal structure}

A contract can reference one fixed event, multiple asynchronously resolving legs, a recurring sequence of events, or a persistent process without a natural event horizon. A rolling contract belongs on this axis: it changes the active constituent through time but does not define a new payoff geometry by itself.

\subsection{Axis 3: settlement structure}

Settlement can be terminal cash settlement, physical claim conversion, staged leg crystallization, rolling replacement, or periodic flow without principal settlement. A funding-only structure belongs here and is better described as an event-linked flow swap unless it also has a separately defined mark and terminal principal convention.

\subsection{Axis 4: venue and oracle composition}

The same geometry can use one venue and one oracle, several markets on one venue, or cross-venue and cross-oracle inputs. Composition determines source synchronization, dispute dependencies, fallback order, and whether a single contract has one finality clock or a vector of clocks.

\begin{figure}[ht]
\centering
\begin{tikzpicture}[every node/.style={font=\small}]
  \node[seriesbox, text width=4.0cm] (geom) {Underlying geometry\\probability, ratio, spread,\\basket, path, liquidity};
  \node[seriesbox, right=8mm of geom, text width=4.0cm] (time) {Temporal rule\\single event, asynchronous,\\rolling, persistent};
  \node[draw=black, rounded corners=2pt, fill=white, below=8mm of $(geom.south)!0.5!(time.south)$, anchor=north, text width=8.8cm, align=center, inner sep=7pt] (contract) {Complete contract specification $\mathfrak C$\\support + clocks + settlement + evidence sources};
  \node[seriesbox, below=8mm of contract.south west, anchor=north west, text width=4.0cm] (settle) {Settlement rule\\cash, conversion, staged,\\roll, flow-only};
  \node[seriesbox, below=8mm of contract.south east, anchor=north east, text width=4.0cm] (comp) {Source composition\\venue, adapter, oracle,\\fallback hierarchy};
  \draw[seriesarrow] (geom.south) -- (contract.north west);
  \draw[seriesarrow] (time.south) -- (contract.north east);
  \draw[seriesarrow] (settle.north) -- (contract.south west);
  \draw[seriesarrow] (comp.north) -- (contract.south east);
\end{tikzpicture}
\caption{The four axes are orthogonal. A rolling spread, a cross-venue basket, or a funding-only liquidity contract is a composition of choices, not an additional primitive category.}
\label{fig:four-axes}
\end{figure}

\begin{criterion}[Composition completeness]
\label{crit:composition}
A composite design is evaluable only if every selected axis supplies a machine-readable rule and every axis-specific necessary condition is satisfied. A control that addresses one axis cannot repair an undefined or failed rule on another axis.
\end{criterion}

This criterion is the organizing principle for the remainder of the paper.

\section{Conditional-Probability Contracts}
\label{sec:conditional}

Let $u_t$ be the price of a joint claim $A\cap B$ and $v_t$ the price of the conditioning claim $B$. When both are treated as coherent probability-like quantities and $v_t>0$, the natural ratio is
\[
  c_t=\frac{u_t}{v_t},\qquad 0\le u_t\le v_t\le1.
\]
The ratio is not a complete derivative specification because $B=0$ leaves the conditional event undefined.

\begin{definition}[Conditional-ratio contract]
A conditional-ratio contract has geometry $g(u,v)=u/v$ on the domain $v>0$ and a separate settlement convention for states in which the conditioning event resolves false or $v$ reaches a defined floor.
\end{definition}

\subsection{Denominator sensitivity}

\begin{proposition}[Small-denominator sensitivity]
\label{prop:conditional-sensitivity}
For $c(u,v)=u/v$ on $0<u<v$, the gradient is
\[
  \nabla c(u,v)=\left(\frac{1}{v},-\frac{u}{v^2}\right).
\]
For any fixed $c_0\in(0,1)$ and sequence $(u_n,v_n)=(c_0v_n,v_n)$ with $v_n\downarrow0$, $\|\nabla c(u_n,v_n)\|\to\infty$. Thus bounded absolute measurement or hedge errors in either leg can induce unbounded local error in the ratio as the conditioning price approaches zero.
\end{proposition}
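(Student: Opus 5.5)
The plan is to compute the partial derivatives directly, substitute the sequence, and then turn the gradient blow-up into the error statement with a first-order expansion plus an exact finite-difference check. Since $c(u,v)=u/v$ is smooth on the open domain $v>0$, the partials are $\partial c/\partial u=1/v$ and $\partial c/\partial v=-u/v^2$. This gives the stated gradient; the restriction $0<u<v$ plays no role in the calculation.

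Along $(u_n,v_n)=(c_0v_n,v_n)$ the gradient becomes
\[
\nabla c(u_n,v_n)=\left(\frac{1}{v_n},-\frac{c_0}{v_n}\right),
\qquad
\|\nabla c(u_n,v_n)\|=\frac{\sqrt{1+c_0^2}}{v_n}.
\]
For $v_n<1$ we have $0<c_0v_n<v_n$, so the sequence stays in the domain. Because $\sqrt{1+c_0^2}\ge1$, the norm diverges as $v_n\downarrow0$. The first coordinate alone already gives the divergence, so the choice of norm is immaterial.

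For the interpretive sentence, I would make ``unbounded local error'' precise in two ways.

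\begin{itemize}
\item \emph{First-order version.} A perturbation $(\delta u,\delta v)$ with $\|(\delta u,\delta v)\|\le\varepsilon$ changes $c$ by $\nabla c\cdot(\delta u,\delta v)+o(\varepsilon)$. Taking $(\delta u,\delta v)$ aligned with the gradient, the worst-case first-order error is $\varepsilon\|\nabla c\|=\varepsilon\sqrt{1+c_0^2}/v_n$. At a fixed error size this is unbounded in $n$.
\item \emph{Exact version.} To avoid relying only on linearization, I would give an exact finite computation. A pure numerator error gives $c(u+\delta,v)-c(u,v)=\delta/v_n$ exactly. A denominator error $\delta v=-\delta$ with $\delta<v_n$ gives $c_0\delta/(v_n-\delta)\ge c_0\delta/v_n$. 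For any fixed $\delta>0$ with $\delta\le v_n$, both exceed any bound as $n\to\infty$.
\end{itemize}

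The main obstacle is not the calculus but this last step. The informal claim that bounded errors cause unbounded ratio error needs the perturbation to stay admissible, that is, within $v>0$ and ideally within $u\le v$. For the numerator perturbation, admissibility holds whenever $\delta\le(1-c_0)v_n$.

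The honest formal content is therefore the following. The local Lipschitz constant, equivalently the condition number of the ratio map, is of order $1/v$. For any fixed error tolerance $\delta$, once $v_n$ is comparable to $\delta$, the induced ratio error is of order $\delta/v_n$, which is unbounded over the sequence. I would state the claim in this ``local'' sense, matching the proposition's wording.
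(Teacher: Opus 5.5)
Your derivative computation and the substitution giving $\|\nabla c(c_0v_n,v_n)\|=\sqrt{1+c_0^2}/v_n$ are exactly the paper's proof, so the formal content is fully covered. The paper stops at that divergence and treats the final sentence as an interpretation of the $1/v$ conditioning. Its appendix adds only the upper bound $|\Delta c|\le (v|\delta u|+u|\delta v|)/(v(v-|\delta v|))$.

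There is one slip in your supplementary ``exact version.'' Since $v_n\downarrow0$, a fixed $\delta>0$ satisfies $\delta\le v_n$ for only finitely many $n$. So ``for any fixed $\delta$ with $\delta\le v_n$ \ldots\ as $n\to\infty$'' is ill-posed, and the denominator shift $-\delta$ eventually violates $v>0$.

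More to the point, under your own admissibility condition $\delta\le(1-c_0)v_n$, the numerator error $\delta/v_n$ is at most $1-c_0$. In general, whenever the perturbed inputs are coherent, both ratios lie in $[0,1]$ and the absolute error is at most $1$. An unbounded absolute error for a fixed $\delta$ therefore requires incoherent perturbed inputs, for instance $u+\delta>v$, which independent quote errors can produce. In that case your numerator computation $\delta/v_n\to\infty$ is valid. On coherent states, what diverges is the amplification factor $|\Delta c|/|\delta|$, which is of order $1/v_n$.

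Your closing ``local'' formulation is the correct reading. It matches the paper's caveat that the proposition does not claim the ratio leaves $[0,1]$ along coherent states.
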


\begin{proof}
Substitution gives
\[
  \|\nabla c(c_0v,v)\|=\frac{\sqrt{1+c_0^2}}{v},
\]
which diverges as $v\downarrow0$.
\end{proof}

The proposition does not say that the ratio itself leaves $[0,1]$ along coherent states. It says that the map from two observed or hedgeable legs to the ratio becomes ill-conditioned. Coherent co-movement along the ray $u=c_0v$ can leave the ratio unchanged; independent quote, basis, latency, or execution errors are amplified.

\subsection{Settlement when the condition fails}

There is no unique no-arbitrage terminal value for $P(A\mid B)$ when $B=0$. The contract must choose a convention. Common choices are:

\begin{table}[ht]
\centering
\caption{Possible $B=0$ conventions. None follows from conditional probability alone.}
\label{tab:conditional-conventions}
\small
\begin{tabularx}{\textwidth}{L{3.1cm}YY}
\toprule
Convention & Benefit & Principal problem \\
\midrule
Void and return a defined account value & Avoids inventing a conditional outcome & Requires a precise variation-margin unwind rule \\
Fixed fallback value & Deterministic and easy to verify & Creates an arbitrary jump and distributional transfer \\
Pre-event time-weighted average & Reduces last-tick dependence & Remains manipulable and depends on window liquidity \\
Last valid ratio & Minimal additional machinery & Maximally exposed to final-observation manipulation \\
Joint-claim settlement $Y_A Y_B$ & Always defined & Changes the instrument from conditional probability to a joint event \\
\bottomrule
\end{tabularx}
\end{table}

A denominator floor $v_t\ge\varepsilon$ is a trading rule, not a settlement rule. It can stop new exposure before the ratio becomes ill-conditioned, but it does not determine what existing positions receive if $B$ resolves false.

\subsection{Resolution order and oracle composition}

If $A$ resolves before $B$, the semantic object $P(A\mid B)$ remains contingent on $B$. If $B$ resolves true first, the contract reduces to a single-event claim on $A$. If the legs use different oracles or rule sources, the contract requires a finality vector rather than one resolution timestamp. These choices alter both margin and manipulation surfaces.

\subsection{Empirical evaluability}

The ratio is directly observable only when the joint and conditioning claims are both listed or when a logical relation identifies the numerator exactly. Constructing $P(A\mid B)$ from two marginals alone requires a dependence model and is model-conditional evidence, not direct measurement. For mutually exclusive event groups, ratios such as $P(A_i\mid \neg A_j)=P(A_i)/(1-P(A_j))$ are observable only if the group semantics and identifiers are exact.

\section{Event Spreads and Asynchronous Finality}
\label{sec:spread}

Let two event-market references be $p_t^{(A)},p_t^{(B)}\in[0,1]$. The spread geometry is
\[
  S_t=p_t^{(A)}-p_t^{(B)}\in[-1,1].
\]
If both legs settle to $Y_A,Y_B\in\{0,1\}$, terminal support is $\{-1,0,1\}$, subject to any logical restrictions on the joint outcome set.

\subsection{First and second finality}

Let $\tau_A$ and $\tau_B$ be the leg-finality times used by the contract. Suppose $\tau_A<\tau_B$.

\begin{proposition}[Residual single-leg equivalence]
\label{prop:spread-residual}
After leg $A$ becomes final and before leg $B$ becomes final,
\[
  S_t=Y_A-p_t^{(B)}.
\]
For any $t_1<t_2$ in this interval,
\[
  S_{t_2}-S_{t_1}=-(p_{t_2}^{(B)}-p_{t_1}^{(B)}).
\]
Hence a long spread position has the same incremental price exposure as a short position in leg $B$, up to the constant $Y_A$.
\end{proposition}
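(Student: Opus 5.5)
The argument is a direct substitution once we make explicit what ``final'' means for the reference process. The first step is to fix the convention that is implicit in the spread geometry and in the temporal axis of \cref{def:specification}. Once leg $A$ is final at $\tau_A$, the contract's reference for that leg is the settled value $Y_A\in\{0,1\}$ and no longer the traded quote. Formally, for all $t\in[\tau_A,\tau_B)$ the leg-$A$ input to $g$ is the constant $Y_A$, which is $\mathcal F_{\tau_A}$-measurable and does not change with $t$. I would state this as the reading of the per-leg finality clock supplied by the composition rule $\mathcal C$. This is the only substantive point, so I expect it to be the main obstacle. If a venue kept quoting leg $A$ after finality and the contract kept referencing that quote, the identity would fail. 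The proof therefore has to rest on the contract's state machine, not on market behaviour.

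Given that convention, the first displayed identity follows immediately by substituting into $S_t=p_t^{(A)}-p_t^{(B)}$:
\[
  S_t=Y_A-p_t^{(B)}\quad\text{for }\tau_A\le t<\tau_B .
\]
For the increment, I would take any $t_1<t_2$ in the interval and subtract. The constant $Y_A$ cancels, which gives
\[
  S_{t_2}-S_{t_1}=-\bigl(p_{t_2}^{(B)}-p_{t_1}^{(B)}\bigr).
\]

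For the exposure statement, I would combine this with the decomposition used in the proof of \cref{prop:support-shortfall}. A long spread position of size $x$ accrues price P\&L $x(S_{t_2}-S_{t_1})$ between marks. By the increment identity, this equals $x(p_{t_1}^{(B)}-p_{t_2}^{(B)})$, which is exactly the incremental P\&L of a short position of size $x$ in leg $B$.

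The two positions differ in level, not in increments. The spread mark is $Y_A-p^{(B)}$, whereas the short leg is marked at $p^{(B)}$ with the sign reversed. The difference is the constant $Y_A$, plus whatever was already accrued before $\tau_A$. This is the ``up to the constant $Y_A$'' qualification.

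As a consistency check, I would note that the same convention extends to $t=\tau_B$, where $S_{\tau_B}=Y_A-Y_B$. This lands in the terminal support $\{-1,0,1\}$, so the residual position's terminal support, conditional on $Y_A$, is $\{Y_A-1,\,Y_A\}$. That is an affine image of a single binary leg, and \cref{prop:support-shortfall} then applies to it with this support.
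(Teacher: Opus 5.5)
Your proposal is correct and follows the paper's proof, which substitutes the fixed value $p_t^{(A)}=Y_A$ for $t\ge\tau_A$ and takes differences. Your explicit finality convention and the profit-and-loss translation only spell out what the paper leaves implicit. One small refinement to your closing check: under logical restrictions on the joint outcome set (e.g.\ mutually exclusive legs), the residual terminal support is only contained in $\{Y_A-1,\,Y_A\}$ and can be a single point, so ``is'' should read ``is contained in''.
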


\begin{proof}
Substitute the fixed value $p_t^{(A)}=Y_A$ after $\tau_A$ and take differences.
\end{proof}

The proposition gives a state transition for margin and funding. A two-leg engine before first finality should become a one-leg engine after first finality. Keeping the original two-leg parameters is not conservative by construction; it is simply the wrong exposure map.

\subsection{Settlement designs}

Three designs are coherent but not equivalent:

\begin{enumerate}
  \item \textbf{Wait for both legs.} The derivative remains open through the finality gap and settles at $Y_A-Y_B$.
  \item \textbf{Crystallize the first leg.} The first leg becomes a fixed cash component while the residual leg remains margined.
  \item \textbf{Convert into constituent claims.} The derivative is transformed into offsetting funded claims or another deliverable portfolio, subject to exact conversion identities.
\end{enumerate}

An early close at $\tau_A$ using the contemporaneous $p_{\tau_A}^{(B)}$ is a fourth design, but it transfers final-leg basis and liquidity risk into a closing-price rule.

\begin{figure}[ht]
\centering
\begin{tikzpicture}[node distance=18mm, every node/.style={font=\small}]
  \node[seriesbox, minimum width=3.1cm] (both) {Both legs active\\$S_t=p_t^{(A)}-p_t^{(B)}$};
  \node[seriesbox, right=of both, minimum width=3.5cm] (one) {First leg final\\$S_t=Y_A-p_t^{(B)}$\\single-leg residual};
  \node[seriesbox, right=of one, minimum width=3.1cm] (final) {Both legs final\\$S_\tau=Y_A-Y_B$};
  \draw[seriesarrow] (both) -- node[midway,above=2mm,fill=white,inner sep=1pt]{first finality} (one);
  \draw[seriesarrow] (one) -- node[midway,above=2mm,fill=white,inner sep=1pt]{second finality} (final);
  \draw[decorate,decoration={brace,amplitude=5pt,mirror},thick] ($(one.south west)+(0,-5mm)$) -- ($(final.south east)+(0,-5mm)$) node[midway,below=7mm]{finality gap};
\end{tikzpicture}
\caption{Asynchronous spread finality. The contract changes risk class after the first leg settles; it is not merely ``half resolved.''}
\label{fig:spread-finality}
\end{figure}

\subsection{Support and margin}

The spread support alone does not determine collateral. By \cref{prop:support-shortfall}, a long entered at $S_0$ has worst-case loss $x(S_0+1)$ when $-1$ is attainable; a short has worst-case loss $x(1-S_0)$ when $1$ is attainable. If logical constraints exclude an extreme joint outcome, the exact feasible support should replace $[-1,1]$.

A simultaneous terminal move can be as large as two support units from one endpoint to the other. A single-leg resolution jump is bounded by one. These are different scenarios and should not share one jump parameter.

\subsection{Empirical evaluability}

Same-venue spreads require synchronized leg prices, exact semantic pairing, per-leg finality, and sufficient overlapping depth. Cross-venue spreads additionally require clock normalization, fee and collateral conversion, and independent venue-status records. A fixed-path replay can evaluate a chosen rule mechanically, but it does not identify how a new spread market would change constituent liquidity.

\section{Event Baskets}
\label{sec:basket}

Let $k$ event legs have prices $p_t^{(i)}\in[0,1]$, weights $w_i\ge0$ with $\sum_iw_i=1$, and feasible terminal outcome set $\mathcal Y\subseteq\{0,1\}^k$. The normalized basket is
\[
  B_t=\sum_{i=1}^{k}w_i p_t^{(i)},
\]
with terminal values $w\cdot y$ for $y\in\mathcal Y$.

\begin{proposition}[Basket support and worst-case loss]
\label{prop:basket-support}
Define
\[
  b_{\min}=\min_{y\in\mathcal Y}w\cdot y,
  \qquad
  b_{\max}=\max_{y\in\mathcal Y}w\cdot y.
\]
Then the terminal basket support is contained in $[b_{\min},b_{\max}]$. A long entered at $B_0$ has worst-case loss $x(B_0-b_{\min})$ and a short has worst-case loss $x(b_{\max}-B_0)$. Increasing the number of legs does not by itself reduce either bound.
\end{proposition}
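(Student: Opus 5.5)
The plan is to reduce the statement to \cref{prop:support-shortfall}, which has already done the account-identity work, by identifying the basket's attainable terminal support. Under the contract as defined, the terminal basket value is $Z=w\cdot y$ for some realized $y\in\mathcal Y$. So the attainable support is $\mathcal Z=\{w\cdot y: y\in\mathcal Y\}$. Since $\mathcal Y\subseteq\{0,1\}^k$ is finite (and nonempty, which I will assume as part of the feasibility convention), the minimum and maximum over $\mathcal Y$ exist. Every element of $\mathcal Z$ lies in $[b_{\min},b_{\max}]$, which proves containment. I will also remark that the endpoints are attained by the minimizing and maximizing feasible outcomes, and that the support is in general a finite subset of this interval rather than the whole interval.

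Next I will apply \cref{prop:support-shortfall} with $z_{\min}=b_{\min}$ and $z_{\max}=b_{\max}$, both of which are attainable. This requires the entry mark to satisfy $B_0\in[b_{\min},b_{\max}]$, and I will state that explicitly as the standing hypothesis inherited from the base proposition. The worst-case long loss is then $x(B_0-b_{\min})$ and the worst-case short loss is $x(b_{\max}-B_0)$, with shortfall exactly when $C_\tau^{\mathrm{net}}$ falls below the applicable amount.

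The remaining claim, that increasing the number of legs does not by itself reduce either bound, is a non-monotonicity statement, and I expect it to be the only step needing care. The claim must be read as: leg count is not a sufficient statistic for the bounds, because they depend on $(w,\mathcal Y)$. I will prove it by exhibiting, for every $k$, feasible configurations whose bounds do not shrink with $k$.

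The first construction is the unrestricted outcome set $\mathcal Y=\{0,1\}^k$. With $w_i\ge0$ and $\sum_i w_i=1$, this gives $b_{\min}=0$ (at $y=0$) and $b_{\max}=1$ (at $y=\mathbf 1$) for every $k$. Both worst-case losses therefore coincide with those of a single probability leg, whatever the weighting. The second construction covers positively dependent legs. If all legs are logically identical, so that $\mathcal Y=\{0,\mathbf 1\}$, the same extremes hold. Together these show that adding legs without restricting the feasible joint outcomes leaves $b_{\min}$ and $b_{\max}$ unchanged.

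For contrast, I will note that a mutually exclusive group, with $\mathcal Y$ equal to the unit vectors, gives $b_{\min}=\min_i w_i$ and $b_{\max}=\max_i w_i$. Any reduction in these bounds therefore comes from the logical restriction on $\mathcal Y$ and from the weights, not from $k$. This is the sense of ``by itself'' in the statement.
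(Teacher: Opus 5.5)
Your proposal is correct and follows the paper's proof essentially step for step. Containment comes from evaluating the linear map $w\cdot y$ on the finite feasible set, the loss formulas come from \cref{prop:support-shortfall} applied at the attained extremes, and the leg-count claim rests on the observation that whenever the all-zero and all-one outcomes remain feasible the bounds stay at $0$ and $1$ for every $k$. Your explicit hypotheses ($\mathcal Y$ nonempty, $B_0\in[b_{\min},b_{\max}]$) and the mutually-exclusive contrast are sensible additions that the paper leaves implicit; the latter appears only in its appendix remark.
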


\begin{proof}
The first statement follows by evaluating the linear map $w\cdot y$ on the finite feasible set. The loss formulas follow from \cref{prop:support-shortfall}. The number of coordinates does not determine the extrema: if all-zero and all-one outcomes remain feasible, the bounds remain $0$ and $1$ for every $k$.
\end{proof}

This corrects a common diversification shortcut. A basket can have a smoother empirical distribution under weak dependence and diffuse weights, but that is a probabilistic assumption. It is not implied by the definition.

\subsection{Staggered resolutions}

When leg $i$ resolves, its contribution becomes the constant $w_iY_i$. The active basket is therefore a sequence of affine lower-dimensional baskets. Margin should be recomputed on the remaining feasible joint support, not mechanically reduced by the resolved weight. Logical dependence can make the residual support larger or smaller than a weight-only rule suggests.

\subsection{Constituent and weight governance}

A deployable basket must freeze or version:

\begin{itemize}
  \item constituent eligibility and replacement;
  \item weights and any rebalancing schedule;
  \item treatment of voided, delayed, or disputed legs;
  \item source hierarchy for each leg;
  \item concentration and common-factor limits.
\end{itemize}

Drop-on-resolution and weight renormalization change the contract geometry and can create jumps unrelated to event information. Keeping resolved legs at fixed payout preserves the original linear claim and is the cleanest baseline.

\subsection{Funding and execution}

Boundary funding applies to a normalized basket only when the basket can approach zero or one and the design seeks relative-basis control there. Cross-leg hedge execution is limited by shared depth and asynchronous fills. Summing standalone per-leg depth overstates executable basket capacity when the same liquidity or capital supports several legs.

\subsection{Empirical evaluability}

Evaluation requires a pre-specified constituent registry, exact weights, synchronized observations, feasible-outcome constraints, per-leg finality, and a shared-book execution model. Without the feasible joint outcome set, the support result and margin calibration are incomplete.

\section{Path-Functional Contracts: Variation and Entropy}
\label{sec:path-functionals}

The original taxonomy grouped volatility and entropy because both are functions of a probability process. Their terminal behavior is not the same and they should be treated separately.

\subsection{Realized variation}

Fix a sampling grid $t_0<\cdots<t_m$ and define realized quadratic variation
\[
  V=\sum_{j=1}^{m}(p_{t_j}-p_{t_{j-1}})^2.
\]
A rolling version replaces the full grid by a moving window. The contract is not specified until it fixes the grid, missing-observation rule, stale-price rule, inclusion or exclusion of the terminal jump, and annualization convention. With a fixed finite grid and $p\in[0,1]$, $0\le V\le m$, but the bound is a sampling artifact rather than an economic terminal value.

A realized-variation contract avoids directional binary settlement: the same event outcome can produce different $V$. It does not avoid event-time jump risk if the terminal price change is included. Excluding the terminal jump removes the largest event move by convention and changes the economic object.

\subsection{Entropy}

For a binary price-like input $p\in[0,1]$, define Shannon entropy in natural units by
\[
  H(p)=-p\log p-(1-p)\log(1-p),
\]
with $0\log0=0$.

\begin{proposition}[Deterministic entropy collapse]
\label{prop:entropy-collapse}
If the terminal event price is set to $Y\in\{0,1\}$, then $H(Y)=0$ for either outcome. The maximum possible pre-terminal entropy is $\log2$ at $p=1/2$. Therefore an entropy-level contract settled at event finality has a deterministic terminal target of zero and can experience a terminal drop as large as $\log2$.
\end{proposition}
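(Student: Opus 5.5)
The plan is to verify the three claims of \cref{prop:entropy-collapse} in order, using only elementary calculus and the convention $0\log0=0$.

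\textbf{Terminal value.} At $Y=0$ the convention gives $H(0)=-0\log0-1\log1=0$, and $H(1)=0$ by the symmetry $H(p)=H(1-p)$. So whichever outcome realizes, the settled entropy level is zero, and the terminal target does not depend on the outcome. For completeness I would also note that $H$ is continuous on $[0,1]$, because $p\log p\to0$ as $p\downarrow0$. This means the convention agrees with the limit of pre-terminal values and is not an arbitrary choice.

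\textbf{Maximum.} On $(0,1)$ the derivative is $H'(p)=\log\frac{1-p}{p}$, which vanishes only at $p=1/2$. The second derivative is $H''(p)=-\frac{1}{p(1-p)}<0$, so $H$ is strictly concave and $p=1/2$ is the unique interior maximizer, with $H(1/2)=\log2$. Since $H$ equals $0<\log2$ at the endpoints, $\log2$ is the global maximum on $[0,1]$. Hence every pre-terminal mark satisfies $0\le H(p_t)\le\log2$.

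\textbf{Terminal drop.} The jump at finality is $H(Y)-H(p_{\tau^-})=-H(p_{\tau^-})\in[-\log2,0]$. Its magnitude is at most $\log2$, with equality exactly when the pre-terminal price sits at $1/2$. This bound is attainable, for example for an event quoted at $1/2$ immediately before resolution. By the decomposition in the proof of \cref{prop:support-shortfall}, applied with $Z=0$, this jump is the final increment of the entry-to-terminal change $x(0-H(p_0))$ for a long entropy position.

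No step is a serious obstacle. The only point needing care is the interpretive one: the statement assumes the contract settles the entropy level at the event-finality price $Y$, rather than at some pre-terminal average. I would state that assumption explicitly, so the deterministic-zero conclusion is not read as applying to window-averaged settlement rules.
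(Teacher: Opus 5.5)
Your proposal is correct and follows the same route as the paper's proof: direct evaluation $H(0)=H(1)=0$, then the first-order condition $H'(p)=\log\frac{1-p}{p}$, which gives the unique interior maximum $\log 2$ at $p=1/2$. Your concavity, continuity and attainability remarks are harmless additions that make the uniqueness claim and the size of the terminal drop slightly more explicit.
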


\begin{proof}
Direct evaluation gives $H(0)=H(1)=0$. Differentiation gives $H'(p)=\log((1-p)/p)$, so the unique interior maximum is $p=1/2$ with value $\log2$.
\end{proof}

Entropy removes outcome direction, not terminal maturity. A long entropy position held to finality loses its entire entry level absent offsetting cash flows. It therefore does not escape the support-dependent account test.

\subsection{Hedgeability and observability}

Neither realized variation nor entropy is a directly deliverable asset. A market maker must hedge with dynamic positions in the underlying event claim, options on the claim if they exist, or offsetting flow. Near finality, those hedges can be unavailable precisely when the path functional changes most.

Empirical evaluation is possible from high-frequency prices only after fixing a canonical observation process. Quote midpoint, executable price, last trade, and robust index can produce materially different variation and entropy series. The choice must be part of the contract, not an analyst preference selected after results are observed.

\section{Liquidity-Index Contracts and Reflexivity}
\label{sec:liquidity}

A liquidity-index contract references a microstructure statistic such as effective spread, executable depth, price impact, cancellation-adjusted depth, or a composite of these quantities. The apparent appeal is direct hedging of market-liquidity risk. The principal difficulty is endogeneity.

\begin{definition}[Liquidity-index contract]
Let $M_t$ be a defined venue-state object and $\ell_t=h(M_t)$ a deterministic, versioned statistic. A liquidity-index contract is a derivative whose cash flows depend on $\ell_t$ or its path.
\end{definition}

The definition requires a complete observation rule: side, depth band, order eligibility, stale-quote treatment, venue outages, hidden or off-chain orders, and aggregation across markets. A displayed-book statistic is not automatically executable liquidity.

\subsection{Fixed-path replay limitation}

\begin{proposition}[Non-identification under reflexivity]
\label{prop:reflexive-replay}
Let a proposed contract or policy $c$ change trader or liquidity-supplier behavior, so venue state is $M(c)$. A fixed-path replay that applies $c$ to the observed no-contract path $M(0)$ identifies the rule output $F(c;M(0))$. It does not identify the deployment output $F(c;M(c))$ unless either $M(c)=M(0)$ over the relevant state variables or a valid model identifies the response map $c\mapsto M(c)$.
\end{proposition}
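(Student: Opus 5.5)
The plan is to treat this as an identification statement in the standard sense. The observable data are the fixed no-contract path $M(0)$ together with the known, deterministic rule $F$. The target is the functional $F(c;M(c))$. I will prove two things. First, the replay output is a known function of the data. Second, the data are compatible with more than one value of the target unless one of the two stated conditions holds.

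The first half is immediate. Because $F$ is versioned and deterministic (as in the definition of a liquidity-index contract, $\ell_t=h(M_t)$ with $h$ fixed), applying $c$ to the recorded path $M(0)$ produces $F(c;M(0))$ mechanically. The replay therefore identifies that quantity and nothing more.

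For the sufficiency direction, take each condition in turn.
\begin{enumerate}[label=(\alph*)]
  \item If $M(c)=M(0)$ on every state variable that $F$ reads, then $F(c;M(c))=F(c;M(0))$ by substitution.
  \item If a valid model identifies the response map $c\mapsto M(c)$, then $M(c)$ is a known function of the data and the model. Hence $F(c;M(c))$ is a known composition.
\end{enumerate}

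For the non-identification direction, I construct two observationally equivalent structures. Both share the same observed path $M(0)$ and the same rule $F$, but they have different response maps, $M_1(c)=M(0)$ and $M_2(c)=M'\neq M(0)$. Here $M'$ is chosen so that $F(c;M')\neq F(c;M(0))$.

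Such an $M'$ exists whenever $F(c;\cdot)$ is non-constant on the relevant state variables. If $F(c;\cdot)$ is constant, the deployment output is trivially identified and the statement holds vacuously. A natural concrete choice for $M'$ is a venue state in which liquidity suppliers have withdrawn depth in response to $c$. This changes $h(M_t)$, so the cash flows differ.

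Both structures generate exactly the replay data $F(c;M(0))$, yet they imply different deployment outputs. So no function of the replay alone can recover $F(c;M(c))$.

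The main obstacle is the phrase ``valid model identifies the response map.'' I will read it as an assumption, not as something to be derived: a model that pins down $M(c)$ from available evidence, such as experimental or quasi-experimental variation in $c$. The proof will only need that pinning-down property. I will also state explicitly that the counterexample relies on the response map being unrestricted by the data. This is precisely the reflexivity channel in \cref{tab:risk-channels}.
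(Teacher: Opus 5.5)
Your proposal is correct and follows the same three-step logic as the paper's proof: the replay mechanically computes $F(c;M(0))$, invariance of the relevant state or an identified response map carries the second argument over to $M(c)$, and the replay alone supplies neither. You make that last step rigorous with an explicit pair of observationally equivalent structures that share $M(0)$ but differ in $c\mapsto M(c)$, and together with your non-constancy caveat on $F(c;\cdot)$ this sharpens the paper's looser claim that equality \emph{requires} invariance (coincidental agreement $F(c;M')=F(c;M(0))$ with $M'\neq M(0)$ is possible, and what fails without the two conditions is \emph{knowing} that equality holds).
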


\begin{proof}
The two estimands differ by their second argument. Equality requires invariance of the relevant venue state or an identified transformation connecting the states. A deterministic replay alone supplies neither.
\end{proof}

This limitation is especially severe when the contract pays on the same spread or depth that its hedging and liquidation trades alter.

\subsection{Manipulation and source availability}

A liquidity index can be moved by posting, cancelling, splitting, or withholding orders. Robust construction therefore requires order eligibility, persistence, self-trade controls, and manipulation-cost analysis. Paper~4 shows that address-level quote-lifecycle attribution is structurally unavailable from public on-chain fills on a Polymarket-class hybrid CLOB, while fill attribution remains available \citep{paper4_microstructure_2026}. A contract requiring public address-level cancellation history would therefore fail an observability gate on that venue.

\subsection{Settlement}

Liquidity has no natural event payoff. The contract must use a dated observation, rolling average, option-like trigger, or flow-only settlement. Calling it an event-linked perpetual does not solve the absence of a natural terminal rule.

\section{Rolling Structures and Flow-Only Contracts}
\label{sec:rolling-flow}

Rolling and flow-only structures are modifiers rather than new underlying geometries.

\subsection{Rolling event sequence}

Let $c=1,2,\ldots$ index recurring event contracts with active intervals $[a_c,b_c]$ and underlying $I_t^{(c)}$. A rolling rule selects an active constituent and defines a roll time $r_c\le b_c$ together with a transition from $c$ to $c+1$.

A complete roll specification requires:

\begin{itemize}
  \item eligibility of the successor event;
  \item overlap and observation windows;
  \item closing and opening price rules;
  \item treatment of missing successors;
  \item roll cash flow and accumulated funding;
  \item whether positions are automatically transferred or closed.
\end{itemize}

A rolling interface can remain available indefinitely, but each constituent still has finite finality. The roll does not eliminate terminal risk; it decides whether that risk is realized, converted, or transferred before each constituent ends.

\subsection{Roll discontinuity}

If $I_{r_c}^{(c)}\neq I_{r_c}^{(c+1)}$, a direct index substitution creates a roll jump unrelated to new event information. A back-adjusted series removes the visual jump but introduces a cumulative accounting adjustment. Futures-style cash settlement of the roll is transparent but realizes basis. The contract must choose one.

\subsection{Funding-only event swap}

A flow-only contract has periodic cash flow
\[
  \Delta C_j=x\,\varphi(Z_{t_j})\,\Delta t_j,
\]
where $Z_t$ is a versioned event-related statistic and $\varphi$ maps it to a payable rate. There is no principal exchange at event finality.

This object is closer to a swap on a flow than to a perpetual future on a deliverable index. It avoids terminal principal shortfall by definition, but creates counterparty-transfer and participation risk. The rate must be capped or otherwise collateralized, and the contract needs a close-out rule for accumulated unpaid flows.

Basis, disagreement, cross-venue divergence, and liquidity can each define $Z_t$, but they are different contracts. A generic label ``funding-only'' is insufficient.

\subsection{Empirical evaluability}

Rolling evaluation requires several complete event cycles and successor mappings. Flow-only evaluation requires the exact statistic, interval schedule, payer collateral, and close-out rule. Fixed-path replay is weakest when $Z_t$ is endogenous to participation in the flow contract.

\section{Corrected Inheritance and Cross-Cutting Controls}
\label{sec:crosscutting}

The revised inheritance map is based on properties, not product names.

\begin{table}[ht]
\centering
\caption{Risk inheritance by canonical design. ``Yes'' means the Paper~1 property applies under the stated support; ``modified'' means a new definition is required; ``no'' means the property is not structurally present.}
\label{tab:inheritance}
\scriptsize
\begin{tabularx}{\textwidth}{L{2.2cm}*{5}{>{\centering\arraybackslash}X}}
\toprule
Design & Terminal support test & Natural $0/1$ boundary funding & Asynchronous finality & Reflexive reference & Natural terminal rule \\
\midrule
Single probability & Yes & Yes & No & Usually no & Binary payout \\
Conditional ratio & Yes, convention-dependent & Yes & Yes & No & No when condition false \\
Event spread & Yes, support-dependent & Modified per leg & Yes & No & Two-leg terminal value \\
Normalized basket & Yes, joint-support-dependent & Possible at basket boundaries & Yes & No & Weighted terminal value \\
Realized variation & Yes, grid-dependent & No & Window-dependent & Possible & Dated/window rule required \\
Entropy level & Yes; terminal target zero & No relative-probability rule & Single event & No & Zero at finality \\
Liquidity index & Rule-dependent & No & Source-dependent & Yes & None \\
Rolling structure & Inherited per constituent & Inherited per constituent & Often & Depends on geometry & Roll rule \\
Flow-only swap & Transfer-solvency test & Target-dependent & Not necessarily & Depends on target & No principal settlement \\
\bottomrule
\end{tabularx}
\end{table}

\subsection{Control assignment}

Five design controls recur, but their inputs differ.

\begin{enumerate}
  \item \textbf{Support-aware collateral.} Compute margin from attainable terminal support and current position direction. Do not infer a loss bound from the number of legs.
  \item \textbf{State-aware exposure transition.} After a leg finalizes, recompute the derivative's affine exposure and margin state.
  \item \textbf{Economically bounded transfers.} Funding or flow payments must respect payer collateral and participation limits; clipping changes the convergence objective and should be explicit.
  \item \textbf{Definition-complete settlement.} Void, delay, missing source, and false conditioning states require contractual payouts or unwind rules.
  \item \textbf{Evidence-compatible evaluation.} A design whose reference is endogenous or unobservable cannot be validated by replaying an incomplete public data surface.
\end{enumerate}

\subsection{Finality vector}

For a multi-leg or multi-source contract, define
\[
  \mathbf t^{\mathrm{fin}}
  =\bigl(t_1^{\mathrm{oracle}},\ldots,t_k^{\mathrm{oracle}},
  t_1^{\mathrm{protocol}},\ldots,t_k^{\mathrm{protocol}}\bigr).
\]
A scalar ``resolution time'' is valid only when the settlement rule is a deterministic function of a single component or all relevant components are synchronized. Otherwise the state machine must preserve which components are final and which remain contestable.

\begin{criterion}[Most restrictive inherited condition]
For a composed design, admission and margin must satisfy every necessary condition generated by its axes. If a rolling cross-venue spread inherits a roll gap, asynchronous finality, and cross-venue source failure, the engine cannot select only the most convenient one.
\end{criterion}

\section{Empirical Evaluability}
\label{sec:evaluability}

Taxonomy is useful only if it distinguishes what can be measured from what can merely be defined.

\subsection{Evidence ladder}

\begin{definition}[Evaluability grades]
A proposed variant is assigned the highest grade it satisfies:
\begin{description}[leftmargin=1.4cm,style=nextline]
  \item[E0] Formal payoff and state-machine specification only.
  \item[E1] Reference series reconstructable from exact observed inputs.
  \item[E2] Terminal and intermediate finality states exactly joinable.
  \item[E3] Mechanical replay feasible on fixed observed paths with explicit invariance limits.
  \item[E4] Behavioral or equilibrium response identified from live, randomized, structural, or otherwise credible evidence.
\end{description}
\end{definition}

A large dataset does not raise a variant above E1 if the required semantic joins or finality states are missing.

\begin{figure}[ht]
\centering
\begin{tikzpicture}[node distance=5mm, every node/.style={font=\small}]
  \node[seriesbox, minimum width=2.2cm] (e0) {E0\\definition};
  \node[seriesbox, right=of e0, minimum width=2.2cm] (e1) {E1\\reference series};
  \node[seriesbox, right=of e1, minimum width=2.2cm] (e2) {E2\\finality joins};
  \node[seriesbox, right=of e2, minimum width=2.2cm] (e3) {E3\\fixed-path replay};
  \node[seriesbox, right=of e3, minimum width=2.2cm] (e4) {E4\\response identified};
  \draw[seriesarrow] (e0) -- (e1);
  \draw[seriesarrow] (e1) -- (e2);
  \draw[seriesarrow] (e2) -- (e3);
  \draw[seriesarrow] (e3) -- (e4);
\end{tikzpicture}
\caption{Evaluability is cumulative. Mechanical replay is not an equilibrium test, and an observable reference series is not a complete finality dataset.}
\label{fig:evidence-ladder}
\end{figure}

\subsection{Minimum data by design}

\begin{table}[ht]
\centering
\caption{Minimum evidence needed for canonical designs. The table identifies necessary inputs, not a claim that they are currently available.}
\label{tab:data-requirements}
\small
\begin{tabularx}{\textwidth}{L{2.4cm}YY}
\toprule
Design & Minimum reference data & Additional blocking evidence \\
\midrule
Conditional ratio & Joint and conditioning legs with exact IDs & False-condition settlement and per-leg finality \\
Spread & Synchronized leg prices and depth & Semantic pairing and first/second finality \\
Basket & Constituent panel, weights, joint constraints & Shared execution and per-leg source states \\
Variation / entropy & Canonical high-frequency index series & Grid/window rule and terminal-print treatment \\
Liquidity index & Full statistic inputs at required granularity & Quote lifecycle, manipulation controls, endogeneity \\
Rolling & Repeated event cycles and successor registry & Roll execution and missing-successor states \\
Flow-only & Exact target statistic and payment schedule & Counterparty collateral and close-out data \\
\bottomrule
\end{tabularx}
\end{table}

\subsection{Current Polymarket-class boundaries}

The PMXT and on-chain data used in the companion papers support exact fills, market-level book observations, and many event identifiers, but public address-level quote placement and cancellation are unavailable on a hybrid CLOB \citep{paper4_microstructure_2026}. This blocks some liquidity-index and participant-response claims even when fill data are large. Cross-platform variants require an additional venue archive and clock reconciliation. Rolling variants require multiple complete cycles. Conditional contracts require exact joint-market semantics, not title similarity.

\subsection{Recommended evaluation order}

The most defensible sequence is:

\begin{enumerate}
  \item same-venue probability and spread contracts with exact finality;
  \item exact-joint conditional contracts and fixed-weight baskets;
  \item path-functional contracts with frozen observation rules;
  \item rolling versions after several complete cycles;
  \item liquidity-index and flow-only contracts only after reflexivity and participant-response designs are specified.
\end{enumerate}

The ordering prioritizes exact identity and finality before increasingly endogenous references.

\section{Design Implications and Research Boundaries}
\label{sec:implications}

\subsection{There is no universal event-perpetual engine}

The taxonomy eliminates the premise that one margin, funding, and halt template can be inherited across all event-linked contracts. The base account identity generalizes, but its inputs change. A conditional contract adds definition failure. A spread adds a finality gap. A basket adds joint-support and shared-liquidity constraints. Entropy has a deterministic zero endpoint. A liquidity index adds reflexivity. Rolling and flow-only structures change the temporal and settlement objects.

\subsection{Product naming should follow the balance sheet}

A contract with periodic cash flows and no principal settlement is an event-linked flow swap. A contract that finances real outcome tokens is a credit product, not merely a synthetic perpetual. A rolling sequence is a programme of finite contracts unless a legally and economically continuous obligation is defined. Clear naming is a risk control because it determines which collateral and finality questions are visible.

\subsection{Connection to manipulation and microstructure}

The taxonomy supplies the objects on which manipulation analysis should operate. Ratio denominators, cross-leg finality, basket weights, liquidity statistics, and roll prices create distinct attack surfaces. Paper~3 develops the incentive and regulatory analysis; Paper~4 identifies which participant and quote-lifecycle claims are empirically observable \citep{paper3_manipulation_2026,paper4_microstructure_2026}.

\subsection{No new empirical claims}

This revision does not evaluate the variants, simulate demand, or recalibrate Paper~1. Construction recipes and evidence grades are research specifications. A future empirical paper should register one contract definition, one source map, and one primary estimand rather than treating the taxonomy as a menu to be tuned after results.

\subsection{Limitations}

The taxonomy is not exhaustive. Options, tranche structures, portfolio margin, physically delivered conversions, and multi-outcome categorical claims can be represented by the axes but are not analyzed in detail. Legal enforceability, user demand, capital formation, and welfare are outside scope. The formal results assume that stated settlement values and source identities are enforceable. No result establishes production safety.

\section{Conclusion}
\label{sec:conclusion}

Event-linked derivatives should be classified by the mathematical and institutional rules that determine their payoffs, not by a shared interface label. The four-axis taxonomy separates underlying geometry, temporal structure, settlement, and source composition. This separation corrects several overbroad inheritance claims in the earlier version.

The formal results are intentionally narrow. Terminal shortfall depends on exact support and collateral. Conditional ratios become ill-conditioned near a vanishing denominator. An asynchronously resolving spread becomes a single-leg residual after first finality. Basket risk depends on feasible joint outcomes, not leg count. Entropy collapses to zero at event finality. Liquidity-index replay does not identify deployment behavior when the reference responds to the contract itself.

These results yield one operational rule: a composite design inherits the most restrictive necessary conditions of all its axes. No margin rule repairs an undefined settlement convention, no halt repairs missing collateral, and no large archive repairs an unobserved or endogenous reference process. The appropriate next step is therefore not to list every conceivable variant. It is to select one fully specified point in the taxonomy and collect evidence at the grade required by its claims.

\appendix
\section{Variant Specification Checklist}
\label{app:checklist}

Before empirical work begins, a proposed contract should answer the following questions.

\begin{enumerate}
  \item What are the exact reference variables and their support?
  \item Which observations are executable prices, which are marks, and which are documentary inputs?
  \item What are the active event, roll, proposal, oracle, protocol, and redemption clocks?
  \item What is the terminal or periodic cash-flow rule in every admissible state?
  \item What happens when a conditioning event is false, a source is missing, a leg is voided, or finality is delayed?
  \item Which venue, adapter, oracle, and rule source is authoritative for each input?
  \item What collateral is available to each position direction under the exact terminal support?
  \item Which controls act on collateral, execution, definition failure, finality, and reflexivity?
  \item Is the reference path plausibly invariant to deployment? If not, what identifies the response?
  \item What data and joins are required for the intended evidence grade?
\end{enumerate}

A specification that cannot answer one of these questions remains E0, regardless of implementation detail elsewhere.

\section{Additional Formal Details}
\label{app:formal}

\subsection{General affine residual after partial finality}

The spread result extends to any affine multi-leg contract
\[
  X_t=a_0+\sum_{i=1}^{k}a_i p_t^{(i)}.
\]
If a subset $F$ of legs has finalized, then
\[
  X_t=\underbrace{a_0+\sum_{i\in F}a_iY_i}_{\text{constant}}
  +\sum_{i\notin F}a_i p_t^{(i)}.
\]
The active delta vector is obtained by deleting finalized coordinates. This identity is useful for baskets and spreads and requires no stochastic assumptions.

\subsection{Conditional-ratio perturbation bound}

For perturbations $(\delta u,\delta v)$ with $v>|\delta v|>0$,
\[
  \frac{u+\delta u}{v+\delta v}-\frac{u}{v}
  =\frac{v\delta u-u\delta v}{v(v+\delta v)}.
\]
Therefore
\[
  \left|\Delta c\right|
  \le
  \frac{v|\delta u|+u|\delta v|}{v(v-|\delta v|)}.
\]
The bound shows explicitly why fixed absolute errors become material as $v$ shrinks.

\subsection{Basket support under logical constraints}

If outcomes satisfy linear or logical constraints, $\mathcal Y$ should be constructed before margin is set. For mutually exclusive and exhaustive outcomes with equal weights, the basket terminal value may be constant; in that case a derivative on the basket level has no terminal directional uncertainty even though each leg is binary. This is another reason leg count alone is uninformative.

\section{Notation Registry}
\label{app:notation}

\begin{table}[ht]
\centering
\caption{Canonical notation for this paper.}
\begin{tabularx}{\textwidth}{L{2.4cm}Y}
\toprule
Symbol & Meaning \\
\midrule
$p_t^{(i)}$ & Observable reference price or index for leg $i$ \\
$q_t$ & Derivative mark price \\
$Y_i$ & Terminal payout of leg $i$ \\
$Z$ & Contract terminal settlement value \\
$\mathcal Z$ & Attainable terminal settlement support \\
$x$ & Position size \\
$C_t^{\mathrm{net}}$ & Net collateral excluding the position's own price profit and loss \\
$\tau_i$ & Finality time used by the contract for leg $i$ \\
$g$ & Underlying-geometry map \\
$\mathcal T$ & Temporal rule \\
$\mathcal S$ & Settlement rule \\
$\mathcal C$ & Venue-oracle composition rule \\
$\mathfrak C$ & Complete event-linked derivative specification \\
\bottomrule
\end{tabularx}
\end{table}

The word \emph{oracle} is qualified throughout. A resolution oracle determines an event outcome; an optimization oracle is a computational subroutine. Only the first appears in the contract-composition axis.

\section{Revision and Reproducibility Record}
\label{app:reproducibility}

This major revision uses the published Paper~2 source as a documentary input and the corrected Paper~1 account mechanics as its formal base. No empirical dataset was queried and no numerical result was recomputed.

The revision changes claims rather than outputs:

\begin{itemize}
  \item the flat seven-variant list is replaced by four orthogonal axes;
  \item inheritance is tied to support and settlement assumptions;
  \item the conditional denominator statement is proved rather than left as a conjecture;
  \item spread finality is expressed as an exact residual-exposure identity;
  \item basket diversification claims are restricted to explicit joint-outcome assumptions;
  \item realized variation and entropy are separated;
  \item rolling and flow-only structures are reclassified as modifiers;
  \item fixed-path replay limits are made explicit for reflexive references.
\end{itemize}

The source package includes the revision notes, formal audit, shared series style, build report, and cryptographic manifest.

\printbibliography[heading=bibintoc,title={References}]
\end{document}